\DeclareMathOperator{\tr}{tr}
\newtheorem{theorem}{Theorem}
\newtheorem{example}{Example}
\newtheorem{lemma}{Lemma}
\newtheorem{corollary}{Corollary}
\begin{document}
\date{}
\title{\bf Quantum Fisher information-based detection of genuine tripartite entanglement}
\author{ \\Long-Mei Yang$^1$, Bao-Zhi Sun$^2$, Bin Chen$^3$, Shao-Ming Fei$^{4,5}$, Zhi-Xi Wang$^4$
\thanks{Corresponding author: wangzhx@cnu.edu.cn}
\\
{\footnotesize $^1$State Key Laboratory of Low-Dimensional Quantum Physics and Department of Physics, Tsinghua University, Beijing 100084, China}\\
{\footnotesize $^2$School of Mathematical Sciences, Qufu Normal University, Qufu 273165, China}\\
{\footnotesize $^3$College of Mathematical Science, Tianjin Normal University, Tianjin 300387, China}\\
{\footnotesize $^4$School of Mathematical Sciences, Capital Normal University, Beijing 100048, China}\\
{\footnotesize $^5$Max-Planck-Institute for Mathematics in the Sciences, 04103 Leipzig, Germany}}
\maketitle
\begin{abstract}
Genuine multipartite entanglement plays important roles in quantum information processing. The detection of genuine multipartite entanglement
has been long time a challenging problem in the theory of quantum entanglement.
We propose a criterion for detecting genuine tripartite entanglement of arbitrary dimensional tripartite states based on quantum Fisher information.
We show that this criterion is more effective for some states in detecting genuine tripartite entanglement by detailed examples.
\end{abstract}

\section{Introduction}
Multipartite entanglement, as one of the most remarkable resources in the theory of quantum information processing and quantum computation, has been investigated extensively in the last two decades. Detecting multipartite entanglement, especially genuine multipartite entanglement of quantum systems is becoming a fundamental issue, due to its various applications in quantum information science \cite{Nilsen}.
A multipartite quantum state is called genuine multipartite entangled if it is not separable with respect to any bipartition \cite{Guhne}.
This special type of multipartite entanglement plays an important role in various quantum information processing tasks such as in the context of extreme spin squeezing \cite{extreme} and highly sensitive metrological tasks \cite{Hyllus,multi}.
It is also the basic ingredient in the measurement-based quantum computation \cite{Briegel} and in various quantum communication protocols \cite{Gisin,Raussendorf,Zhao,Yeo,Chen}.
However, characterization and detection of quantum entanglement is a formidably difficult task, and no efficient methods have been developed so far.
Researchers have devoted much to detect quantum entanglement \cite{GaoT1,GaoT2,Sperling,Eltschka,Yusef1,Yusef2,Zhanglin,Yusef3,Yusef4,Fisher5,Fisher6}.
As a special class of multipartite entanglement, genuine multipartite entanglement also attracts researchers' attention.
To better detect genuine multipartite entanglement, a series criteria have been presented such as linear and nonlinear entanglement witnesses
\cite{Huber1,Huber2,Huber4,Bancal,Huber5,Clivaz}, Bell-like inequalities \cite{Lim1},
and the norms of the correlation tensors \cite{Huber3,Lim2,Zhaohui}
and genuine multipartite entanglement concurrence \cite{Lim2,Ma1,Ma2,Lim3}.

In Ref. \cite{Fisher5}, the authors developed a method to detect bipartite entanglement by use of quantum Fisher information. They proposed an alternative entanglement criterion complementing to the criteria based on variance and local uncertainty relations. After then Akbari-Kourbolagh \emph{et al.} introduced another entanglement criterion for multipartite systems based on quantum Fisher information \cite{Fisher6}.

In this paper, inspired by the method in \cite{Fisher5}, we provide new criteria in detecting genuine tripartite entanglement based on quantum Fisher information.
We show that the new criteria are better than the existing ones by detailed example of tripartite states.

\section{Genuine entanglement criteria}
As we all know, there have been many genuine multipartite entanglement criteria.
Usually, researchers creat genuine multipartite entanglement criteria
by entanglement witness, the norms of the correlation tensors, genuine multipartite entanglement concurrence and so on.
Now we review some criterias.

In \cite{Huber5}, Clivaz {\it et al.} proposed a genuine multipartite entanglement criteria based on positive maps:
Let $\rho\in\mathcal{H}_1\otimes\mathcal{H}_2\otimes\cdots \mathcal{H}_n$,
and $A\subset\{1,2,\ldots,n\}$ denote a proper subset of the parties.
 A state $\rho_{2-sep}$ is biseparable if and only if it can be decomposed as
\begin{equation}
\rho_{2-sep}=\sum\limits_A\sum\limits_ip_i^{(A)}\rho_i^{(A)}\otimes\rho_i^{(\bar{A})}, \ p_i^{(A)}\geqslant0, \
\sum\limits_A\sum\limits_ip_i^{(A)}=1,
\end{equation}
where $\rho_A$ denotes a quantum state for the subsystem defined by the subset $A$ and
$\sum\limits_A$ stands for the sum over all bipartitions $A|\bar{A}$.
Then they sought for maps of the form
\begin{equation}
\Phi_{GME}:=\sum_A\Lambda_A\otimes \mathbb{I}_{\bar{A}}\circ \mathcal{U}^{(A)}+M,
\end{equation}
where $M$ is a positive map, $\mathcal{U}^{(A)}[\rho]=\sum_ip_i^{(A)}U_i^{(A)}\rho\left(U_i^{(A)}\right)^{\dag}$
is a family of convex combinations of local unitaries, and
$\Phi_{GME}[\rho_{2-sep}]\geqslant0$ for any $\rho_{2-sep}$.

In \cite{Lim3}, Li {\it et al.} gave two methods to detect genuine tripartite entanglement.
First, they proposed a criteria by the norms of the correlation tensors.
That is, $\rho$ is  genuine tripartite entangled if
\begin{equation}
M_k(\rho)>\frac{2\sqrt{2}}{3}(2\sqrt{k}+1)\frac{d-1}{d}\sqrt{\frac{d+1}{d}}, \ \ \ \forall k=1,2,\ldots,d^2-1,
\end{equation}
where $M_k(\rho)=\frac{1}{3}\left(\|T_{\underline{1}23}\|_k+\|T_{\underline{2}13}\|_k+\|T_{\underline{3}12}\|_k\right)$,
$\|M\|_k=\sum\limits_{i=1}^k\eta_i$ denote the $k$ norm for an $n\times n$ matrix $M$ with
$\eta_i$, $i=1,2,\ldots,n$ are singular values of $M$ in decreasing order,
and $T_{\underline{i}jk}$ be the corresponding correlation matrix for $\rho$.
Then they provided another method via genuine multipartite entanglement concurrence $C_{GME}$.
For a tripartite qudit state $\rho$,
\begin{equation}\label{concurrence}
C_{GME}\geqslant\max\left\{\frac{1}{2\sqrt{2}}\|T^{(123)}\|-\frac{d-1}{d},0\right\}.
\end{equation}
From \eqref{concurrence}, one can see $\rho$ is genuine entangled if $\frac{1}{2\sqrt{2}}\|T^{(123)}\|-\frac{d-1}{d}>0$.

Now we first briefly introduce some basic concepts about quantum Fisher information.
The quantum Fisher information $F(\rho,A)$ of a state $\rho$ with respect to an observable $A$ is defined by \cite{Fisher,Fisher1,Fisher2}
\begin{equation}\label{fisher1}
F(\rho,A)=\frac{1}{4}\tr\rho L^2,
\end{equation}
where $L$ is the symmetric logarithmic derivative determined by
$$
i[\rho,A]=\frac{1}{2}(L\rho+\rho L),
$$
with the square bracket denoting the commutator.

When the spectral decomposition of $\rho$ is known,
\begin{equation}
\rho=\sum\limits_k\lambda_k|k\rangle\langle k|,
\end{equation}
where $\lambda_k$ are the non-negative eigenvalues and $|k\rangle$ are the corresponding eigenvectors of $\rho$,
then for any observable $A$ on the system Hilbert space, the quantum Fisher information of \eqref{fisher1} can be
\begin{equation}
F(\rho,A)=\sum\limits_{k,l}\frac{(\lambda_k-\lambda_l)^2}{2(\lambda_k+\lambda_l)}|\langle k|A|l\rangle|^2,
\end{equation}
where the sums run over only those indices for which $\lambda_k+\lambda_l$ is nonzero \cite{Hyllus,Fisher2}.

The quantum Fisher information has the following remarkable information-theoretic properties \cite{Fisher3,Fisher4}:
{\rm (1)} Additivity:
$$
F(\rho^a\otimes \rho^b,A\otimes\mathbb{I}^b+\mathbb{I}^a\otimes B)=F(\rho^a,A)+F(\rho^b,B),
$$
where $\rho^a$ and $\rho^b$ are the local quantum states associated with the subsystems $a$ and $b$, $A$ and $B$ are observables,
and $\mathbb{I}^a$ and $\mathbb{I}^b$ stand for the identity operators on subsystems $a$ and $b$, respectively.

{\rm (2)} Convexity:
$$
F\left(\sum_j\lambda_j\rho_j,A\right)\leqslant\sum_j\lambda_jF(\rho_j,A)
$$
for quantum states $\rho_j$, where $\sum_j\lambda_j=1$, $\lambda_j\geqslant0$.

{\rm (3)} For any pure state $\rho$,
\begin{equation}\label{fisher2}
F(\rho,A)=(\Delta A)_\rho^2,
\end{equation}
where $(\Delta A)_\rho^2=\langle A^2\rangle_\rho-\langle A\rangle_\rho^2$ is the variance (uncertainty) of the observable $A$ with respect to the state $\rho$.

{\rm (4)} For an $N$-qudit quantum pure state $|\psi\rangle$ mixed with the white noise,
$\rho=p|\psi\rangle\langle\psi|+(1-p)\frac{I}{d^N}$,
\begin{equation}\label{fisher-1}
F(\rho,A)=\frac{p^2}{p+2(1-p)d^{-N}}F(|\psi\rangle,A).
\end{equation}

As the formula \eqref{fisher-1} is used many times in our paper, we give a brief proof for it.
Assume the spectral decomposition of $|\psi\rangle\langle\psi|$ is
\begin{equation*}
|\psi\rangle\langle\psi|=\sum\limits_k\lambda_k|k\rangle\langle k|
\end{equation*}
with $\lambda_1=1$, and $\lambda_2=\cdots\lambda_{d^N}=0$.
Then the corresponding spectral decomposition of $\rho$ can be
\begin{equation*}
\rho=\sum\limits_{k}\lambda_{k}^{\prime}|k\rangle\langle k|
\end{equation*}
with $\lambda_1^{\prime}=p+\frac{1-p}{d^{N}}$, and $\lambda_2^{\prime}=\cdots\lambda_{d^N}^{\prime}=\frac{1-p}{d^{N}}$.
Thus,
\begin{equation*}
F(\rho,A)
=\sum\limits_{k=2}^{d^{N}}\frac{(\lambda_1^{\prime}-\lambda_{k}^{\prime})^2}{2(\lambda_{1}^{\prime}+\lambda_{k}^{\prime})}|\langle k|A|l\rangle|^2
=\sum\limits_{k=2}^{d^{N}}\frac{p^2}{2(p+2(1-p)d^{-N})}|\langle k|A|l\rangle|^2
=\frac{p^2}{p+2(1-p)d^{-N}}F(|\psi\rangle,A).
\end{equation*}

We first present the following lemmas.

\begin{lemma}\label{FISHER1}
For any qubit state $\rho$,
$\sum\limits_{i=1}^3F(\rho,\sigma_i)\leqslant2$, where $\sigma_i$ are Pauli matrices.
\end{lemma}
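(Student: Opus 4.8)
The plan is to exploit the Bloch representation of a qubit together with the spectral formula for the quantum Fisher information. First I would write any qubit state as $\rho=\frac{1}{2}(\mathbb{I}+\vec{r}\cdot\vec{\sigma})$ with Bloch vector $\vec{r}=(r_1,r_2,r_3)$ satisfying $|\vec{r}|=r\leqslant1$; its two eigenvalues are $\lambda_{\pm}=\frac{1}{2}(1\pm r)$, with corresponding orthonormal eigenvectors $|1\rangle,|2\rangle$ (the $\pm1$ eigenstates of $\hat{n}\cdot\vec{\sigma}$, where $\hat{n}=\vec{r}/r$).

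Next I would evaluate $F(\rho,\sigma_i)$ from the spectral expression $F(\rho,A)=\sum_{k,l}\frac{(\lambda_k-\lambda_l)^2}{2(\lambda_k+\lambda_l)}|\langle k|A|l\rangle|^2$. Because the diagonal terms vanish ($\lambda_k-\lambda_l=0$) and the two off-diagonal terms are equal, and since $\lambda_++\lambda_-=1$ and $\lambda_+-\lambda_-=r$, this collapses to $F(\rho,\sigma_i)=r^2|\langle1|\sigma_i|2\rangle|^2$.

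The heart of the argument is then to show $\sum_{i=1}^3|\langle1|\sigma_i|2\rangle|^2=2$, which I would do without computing the eigenvectors explicitly. Using the completeness relation $|2\rangle\langle2|=\mathbb{I}-|1\rangle\langle1|$ and $\sigma_i^2=\mathbb{I}$,
\begin{equation*}
\sum_{i=1}^3|\langle1|\sigma_i|2\rangle|^2=\sum_{i=1}^3\langle1|\sigma_i(\mathbb{I}-|1\rangle\langle1|)\sigma_i|1\rangle=\sum_{i=1}^3\langle1|\sigma_i^2|1\rangle-\sum_{i=1}^3\langle1|\sigma_i|1\rangle^2=3-\sum_{i=1}^3 n_i^2,
\end{equation*}
where $\langle1|\sigma_i|1\rangle=n_i$ are the components of the unit Bloch vector $\hat{n}$ of the eigenstate $|1\rangle$, so $\sum_i n_i^2=1$ and the sum equals $3-1=2$.

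Combining these, $\sum_{i=1}^3 F(\rho,\sigma_i)=2r^2\leqslant2$, with equality precisely when $\rho$ is pure. The only delicate point is justifying the collapse of the spectral sum and evaluating the matrix elements; the completeness-relation trick is what sidesteps the need for explicit eigenvectors along a general Bloch direction. As a sanity check, for a pure state property \eqref{fisher2} gives $\sum_i F(\rho,\sigma_i)=\sum_i(\Delta\sigma_i)_\rho^2=3-\sum_i\langle\sigma_i\rangle_\rho^2=2$, which matches the bound.
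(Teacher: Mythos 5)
Your proof is correct, but it takes a genuinely different route from the paper. The paper's argument is a two-line reduction: by convexity of the quantum Fisher information it suffices to treat pure states, and for a pure state property \eqref{fisher2} gives $\sum_{i=1}^3F(|\psi\rangle,\sigma_i)=\sum_{i=1}^3(\Delta\sigma_i)^2=3-|\vec r|^2=2$ — which is exactly the ``sanity check'' you append at the end. You instead evaluate the spectral formula directly on the mixed state, using $\lambda_\pm=\tfrac12(1\pm r)$ and the completeness trick $\sum_i|\langle1|\sigma_i|2\rangle|^2=3-\sum_i n_i^2=2$ to obtain the exact closed form $\sum_{i=1}^3F(\rho,\sigma_i)=2|\vec r|^2$. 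This is strictly more information than the lemma asserts: it identifies the equality case (purity) and quantifies how far mixed states fall below the bound, and it does so without invoking convexity at all. The trade-off is that the paper's convexity-plus-pure-state pattern is shorter and is reused verbatim for Lemmas \ref{FISHER2}--\ref{FISHER3}, where an exact spectral computation for general mixed states would be unwieldy. One cosmetic caveat in your write-up: at $r=0$ the unit vector $\hat n=\vec r/r$ is undefined, but the conclusion $F(\rho,\sigma_i)=0$ holds trivially there since all eigenvalue differences vanish, so nothing breaks.
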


\begin{proof}
Any qubit state $\rho$ can be written as $\rho=\frac{1}{2}(I+\vec{r}\cdot\vec{\mathbf{\sigma}})=
\frac{1}{2}(I+r_1\sigma_1+r_2\sigma_2+r_3\sigma_3)$,
where $\vec{r}$ is the 3-dimensional Bloch vector with $|\vec{r}|\leq 1$.
For a qubit pure state $|\psi\rangle$, $|\vec{r}|=1$, we have
$\sum\limits_{i=1}^3F(|\psi\rangle,\sigma_i)=3-|\vec{r}|^2=2$.
Hence, for any qubit state $\rho=\sum_j\lambda_j|\psi_j\rangle\langle\psi_j|$,
we obtain $\sum\limits_{i=1}^3F(\rho,\sigma_i)\leqslant\sum\limits_{j}\lambda_i\sum\limits_{i=1}^3F(|\psi_j\rangle,\sigma_i)=2$.
\end{proof}

\begin{lemma}\label{FISHER2}
For any bipartite qubit state $\rho^{ab}$,
\begin{equation}\label{lemma2}
\sum\limits_{i=1}^3F\left(\rho^{ab},\sigma_i^a\otimes \mathbb{I}^b+\mathbb{I}^a\otimes\sigma_i^b\right)\leqslant8,
\end{equation}
where $\sigma_i^a$ and $\sigma_i^b$ are Pauli matrices on subsystems $a$ and $b$, respectively.
\end{lemma}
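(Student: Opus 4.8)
The plan is to mirror the structure of the proof of Lemma \ref{FISHER1}: first establish the bound for pure states using property (3), and then extend it to arbitrary mixed states by the convexity property (2). Throughout, write $J_i := \sigma_i^a\otimes\mathbb{I}^b+\mathbb{I}^a\otimes\sigma_i^b$ for $i=1,2,3$, so that the left-hand side of \eqref{lemma2} is $\sum_{i=1}^3 F(\rho^{ab},J_i)$.

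First I would reduce to the pure-state case. Taking the spectral decomposition $\rho^{ab}=\sum_j\lambda_j|\psi_j\rangle\langle\psi_j|$ with $\lambda_j\geqslant0$ and $\sum_j\lambda_j=1$, convexity gives
$$
\sum_{i=1}^3 F\left(\rho^{ab},J_i\right)\leqslant\sum_j\lambda_j\sum_{i=1}^3 F\left(|\psi_j\rangle,J_i\right),
$$
so it suffices to prove $\sum_{i=1}^3 F(|\psi\rangle,J_i)\leqslant8$ for every pure state $|\psi\rangle$. For a pure state, property (3) identifies the Fisher information with the variance, hence
$$
\sum_{i=1}^3 F\left(|\psi\rangle,J_i\right)=\sum_{i=1}^3\left(\langle J_i^2\rangle-\langle J_i\rangle^2\right)\leqslant\left\langle\sum_{i=1}^3 J_i^2\right\rangle,
$$
where the last inequality merely discards the non-negative squared first moments $\langle J_i\rangle^2$.

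The key computation is therefore the operator $\sum_{i=1}^3 J_i^2$. Using $(\sigma_i)^2=\mathbb{I}$ one finds
$$
\sum_{i=1}^3 J_i^2=6\,\mathbb{I}^a\otimes\mathbb{I}^b+2\sum_{i=1}^3\sigma_i^a\otimes\sigma_i^b.
$$
Recognizing that $\sum_{i=1}^3\sigma_i^a\otimes\sigma_i^b=2S-\mathbb{I}$, where $S$ is the swap operator on $\mathcal{H}^a\otimes\mathcal{H}^b$, gives $\sum_{i=1}^3 J_i^2=4\mathbb{I}+4S$. Since the eigenvalues of $S$ are $\pm1$, the largest eigenvalue of $\sum_{i=1}^3 J_i^2$ equals $8$ (attained on the symmetric subspace), so $\langle\sum_{i=1}^3 J_i^2\rangle\leqslant8$ for every state. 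Chaining this with the two displayed inequalities yields the claim.

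The argument is essentially routine; the only step requiring a little care is identifying $\sum_{i=1}^3\sigma_i^a\otimes\sigma_i^b$ with $2S-\mathbb{I}$ --- or, equivalently, directly diagonalizing this $4\times4$ Hermitian operator to read off that its spectrum is $\{1,1,1,-3\}$ --- so as to pin down the extremal eigenvalue $8$. Everything else (the reduction to pure states and the dropping of the squared first moments) follows immediately from the stated properties (2) and (3) of the quantum Fisher information.
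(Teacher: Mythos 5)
Your proof is correct, and it takes a genuinely different route from the paper's. Writing $J_i:=\sigma_i^a\otimes\mathbb{I}^b+\mathbb{I}^a\otimes\sigma_i^b$ as you do, the paper also reduces to pure states by convexity and, in the Bloch--correlation-tensor parametrization, arrives at $\sum_{i=1}^3F(|\psi\rangle\langle\psi|,J_i)=6+2\sum_i t_{ii}-\sum_i(r_i+s_i)^2$, which is exactly your $\sum_i\bigl(\langle J_i^2\rangle-\langle J_i\rangle^2\bigr)$ since $\langle J_i^2\rangle=2+2t_{ii}$ and $\langle J_i\rangle=r_i+s_i$. Where the two arguments diverge is in bounding $\sum_i t_{ii}$: the paper invokes a rank-one identity, $(t_{11}+t_{22})^2+(t_{12}-t_{21})^2=(1-t_{33})^2-(r_3-s_3)^2$, valid only for pure states, and then runs a chain of estimates involving a square root with a sign choice; you instead observe that $\sum_i\sigma_i^a\otimes\sigma_i^b=2S-\mathbb{I}$ has spectrum $\{1,1,1,-3\}$, so $\sum_i J_i^2=4(\mathbb{I}+S)$ has largest eigenvalue $8$ and hence $\langle\sum_i J_i^2\rangle\leqslant 8$ for \emph{every} state. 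Your operator-norm argument is cleaner and more transparent: it avoids the paper's unmotivated pure-state identity entirely (purity is only needed to identify $F$ with the variance), and it makes the extremal value visibly attached to the symmetric subspace. Both proofs discard the non-negative term $\sum_i\langle J_i\rangle^2$, and both are tight, e.g.\ on $\frac{1}{\sqrt{2}}(|00\rangle+|11\rangle)$, which lies in the symmetric subspace and has $\langle J_i\rangle=0$ for all $i$, consistent with the equality conditions stated after the paper's proof.
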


\begin{proof}
By the convexity of quantum Fisher information, we only need to prove that the inequality holds for pure states.
Any bipartite qubit state $\rho^{ab}$ can be expressed as
$\rho^{ab}=\frac{1}{4}(I\otimes I+\vec{r}\cdot\vec{\sigma}\otimes I+I\otimes \vec{s}\cdot \vec{\sigma}+\sum_{i,j=1}^3t_{ij}\sigma_i\otimes\sigma_j)$,
where $\vec{r}=(r_1,r_2,r_3)^{T}$ and $\vec{s}=(s_1,s_2,s_3)^{T}$ are real 3-dimensional vectors, with $T$ denoting transpose.
Since the rank of a bipartite pure state $|\psi\rangle\langle\psi|$ is one, one has
$$
(t_{11}+t_{22})^2+(t_{12}-t_{21})^2=(1-t_{33})^2-(r_3-s_3)^2.
$$
Therefore, we have
$$
t_{11}+t_{22}=\pm\sqrt{(1-t_{33})^2-(r_3-s_3)^2-(t_{12}-t_{21})^2}.
$$
From Eq. \eqref{fisher2} it also holds that
$$
\sum\limits_{i=1}^3F\left(|\psi\rangle\langle\psi|,\sigma_i^a\otimes \mathbb{I}^b+\mathbb{I}^a\otimes\sigma_i^b\right)
=6+2\sum\limits_{i=1}^3t_{ii}-\sum\limits_{i=1}^3(r_i+s_i)^2.
$$
Hence, we get
$$
\begin{array}{rl}
& \ \ \ \ \sum\limits_{i=1}^3F\left(|\psi\rangle\langle\psi|,\sigma_i^a\otimes \mathbb{I}^b+\mathbb{I}^a\otimes\sigma_i^b\right)\\[1mm]
& \leqslant6+2\sum\limits_{i=1}^3t_{ii}\\[1mm]
& \leqslant6+2t_{33}+2\sqrt{(1-t_{33})^2-(r_3-s_3)^2-(t_{12}-t_{21})^2}\\[1mm]
& \leqslant6+2t_{33}+2\sqrt{(1-t_{33})^2}=8.
\end{array}
$$
The equality in (\ref{lemma2}) holds for $t_{12}=t_{21}$, $r_3=s_3=0$, and $r_1+s_1=r_2+s_2=0$,
for instance, $t_{11}=-1$, $t_{22}=t_{33}=1$, and the rest parameters are zero.
\end{proof}

Now we generalize Lemmas \ref{FISHER1} and \ref{FISHER2} to qudit and 2-qudit states conditions by Gell-Mann matrices, respectively.
The Gell-Mann matrices are defined as
\begin{equation}\label{gell-1}
\sigma_t^{jk}=|j\rangle\langle k|+|k\rangle\langle j|, \ \ 0\leqslant j<k\leqslant d-1,
\end{equation}
\begin{equation}\label{gell-2}
\sigma_s^{jk}=-i|j\rangle\langle k|+i|k\rangle\langle j|, \ \ 0\leqslant j<k\leqslant d-1,
\end{equation}
and
\begin{equation}\label{gell-3}
\sigma^{l}=\sqrt{\frac{2}{l(l+1)}}\left(\sum\limits_{j=0}^{l-1}|j\rangle\langle j|-l|l\rangle\langle l|\right), \ \ 1\leqslant l\leqslant d-1.
\end{equation}
\begin{lemma}\label{FISHER3}
For any $d$-dimensional qudit state $\rho$,
\begin{equation}
\sum\limits_{0\leqslant j<k\leqslant d-1}F\left(\rho,\sigma_t^{jk}\right)+\sum\limits_{0\leqslant j<k\leqslant d-1}F\left(\rho,\sigma_s^{jk}\right)
+\sum\limits_{1\leqslant l \leqslant d-1}F\left(\rho,\sigma^{l}\right)\leqslant2(d-1),
\end{equation}
where $\sigma_a^{jk}, \sigma_s^{jk}$, and $\sigma^{l}$ are Gell-Mann matrices
defined in \eqref{gell-1}, \eqref{gell-2}, and \eqref{gell-3}.
\end{lemma}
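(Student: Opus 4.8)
The plan is to mirror the strategy of Lemmas~\ref{FISHER1} and~\ref{FISHER2}: by the convexity property~(2) of the quantum Fisher information it suffices to prove the inequality for pure states $\rho=|\psi\rangle\langle\psi|$, since a general state $\rho=\sum_j\lambda_j|\psi_j\rangle\langle\psi_j|$ then inherits the bound by convex combination. For a pure state I would invoke property~(3), namely $F(|\psi\rangle,A)=(\Delta A)^2_\psi=\langle A^2\rangle_\psi-\langle A\rangle^2_\psi$, and sum over all $d^2-1$ Gell-Mann matrices. Denoting by $\{A_\mu\}$ the full list from \eqref{gell-1}--\eqref{gell-3} and writing the total as $S=\sum_\mu\langle A_\mu^2\rangle-\sum_\mu\langle A_\mu\rangle^2$, the proof reduces to two operator identities.

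First I would establish the completeness relation $\sum_\mu A_\mu^2=\frac{2(d^2-1)}{d}\mathbb{I}$. This is cleanest termwise: for each off-diagonal pair one has $(\sigma_t^{jk})^2=(\sigma_s^{jk})^2=|j\rangle\langle j|+|k\rangle\langle k|$, and since each index lies in exactly $d-1$ such pairs, the two off-diagonal families contribute $2(d-1)\mathbb{I}$ together; a direct summation of the diagonal family gives $\sum_l(\sigma^l)^2=\frac{2(d-1)}{d}\mathbb{I}$, and the two add to $\frac{2(d^2-1)}{d}\mathbb{I}$. Consequently $\sum_\mu\langle A_\mu^2\rangle=\frac{2(d^2-1)}{d}$ for every normalized state.

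Next I would evaluate $\sum_\mu\langle A_\mu\rangle^2$ via the generalized Bloch decomposition $\rho=\frac{1}{d}\mathbb{I}+\frac{1}{2}\sum_\mu r_\mu A_\mu$ with $r_\mu=\langle A_\mu\rangle=\tr(\rho A_\mu)$. Using the orthogonality $\tr(A_\mu A_\nu)=2\delta_{\mu\nu}$ one obtains $\tr\rho^2=\frac{1}{d}+\frac{1}{2}\sum_\mu\langle A_\mu\rangle^2$; purity of $|\psi\rangle$ forces $\tr\rho^2=1$, whence $\sum_\mu\langle A_\mu\rangle^2=\frac{2(d-1)}{d}$. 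Subtracting the two identities yields $S=\frac{2(d^2-1)}{d}-\frac{2(d-1)}{d}=2(d-1)$ exactly for pure states, and convexity then gives the claimed inequality for arbitrary $\rho$.

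The routine arithmetic behind the two identities is what has to be verified with care; the only mildly delicate step is the diagonal family, where the $l$-dependent normalization $\sqrt{2/[l(l+1)]}$ in \eqref{gell-3} makes the bookkeeping for $\sum_l(\sigma^l)^2$ the part most prone to slips. Everything else copies the $d=2$ template of Lemma~\ref{FISHER1}, and indeed setting $d=2$ recovers the bound $2(d-1)=2$ there.
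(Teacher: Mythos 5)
Your proof is correct, and while it shares the paper's overall skeleton (reduce to pure states by convexity, then use $F(|\psi\rangle,A)=(\Delta A)^2_\psi$), the way you execute the key computation is genuinely different and cleaner. The paper expands $|\varphi\rangle=\sum_s\varphi_s|s\rangle$ and evaluates each of the three families of variances term by term in the amplitudes $\varphi_s$, which is a somewhat error-prone bookkeeping exercise (indeed the displayed intermediate formulas in the paper contain several typographical slips, though the final total $2(d-1)$ is right). You instead split the total into $\sum_\mu\langle A_\mu^2\rangle-\sum_\mu\langle A_\mu\rangle^2$ and evaluate each piece by a basis-free identity: the completeness relation $\sum_\mu A_\mu^2=\frac{2(d^2-1)}{d}\mathbb{I}$, and the purity relation $\tr\rho^2=\frac{1}{d}+\frac{1}{2}\sum_\mu\langle A_\mu\rangle^2$ coming from the Bloch decomposition with $\tr(A_\mu A_\nu)=2\delta_{\mu\nu}$. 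Both identities check out (I verified the telescoping sum $\sum_{l>m}\frac{2}{l(l+1)}+\frac{2m}{m+1}=\frac{2(d-1)}{d}$ for the diagonal family), and the subtraction gives exactly $2(d-1)$ for pure states, matching the paper. Your route buys transparency and robustness: it makes clear that the bound is saturated by \emph{every} pure state, it isolates where purity enters (only through $\tr\rho^2=1$), and it generalizes immediately to any orthogonal operator basis with the same normalization, whereas the paper's amplitude computation is tied to the specific Gell-Mann form.
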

\begin{proof}
By the convexity of quantum Fisher information, we only need to prove the inequality holds for pure states.
Any $d$-dimensional pure state can be expressed as
$|\varphi\rangle=\sum\limits_{s=0}^{d-1}\varphi_s|s\rangle$ with $\sum\limits_{s=0}^{d-1}|\varphi_s|^2=1$.
Then one has
\begin{equation}
\sum\limits_{0\leqslant j<k\leqslant d-1}F\left(|\varphi\rangle,\sigma_t^{jk}\right)=
\sum\limits_{0\leqslant j<k\leqslant d-1}|\varphi_j|^2+|\varphi_k|^2-\left(\varphi_j\varphi_k^*+\varphi_j^*\varphi_k\right),
\end{equation}
\begin{equation}
\sum\limits_{l\leqslant l \leqslant d-1}\sum\limits_{0\leqslant j<k\leqslant d-1}F\left(|\varphi\rangle,\sigma_s^{jk}\right)=
\sum\limits_{0\leqslant j<k\leqslant d-1}|\varphi_j|^2+|\varphi_k|^2+\left(\varphi_j\varphi_k^*-\varphi_j^*\varphi_k\right),
\end{equation}
and
\begin{equation}
\begin{array}{rl}
& \ \ \ \ \sum\limits_{0\leqslant j<k\leqslant d-1}
F\left(|\varphi\rangle,\sigma^{l}\right)\\
&=\sum\limits_{0\leqslant j<k\leqslant d-1}\frac{2}{l(l+1)}\left(\left(\sum\limits_{j=0}^{l-1}|\varphi_j|^2-l|\varphi_l|^2\right)-
\left(\sum\limits_{j=0}^{l-1}|\varphi_j|^2+l^2|\varphi_l|^2\right)^2\right)\\
&=\sum\limits_{0\leqslant j<k\leqslant d-1}4|\varphi_j|^2|\varphi_k|^2.
\end{array}
\end{equation}
Thus,
\begin{equation}
\sum\limits_{0\leqslant j<k\leqslant d-1}F\left(|\varphi\rangle,\sigma_t^{jk}\right)+\sum\limits_{0\leqslant j<k\leqslant d-1}F\left(|\varphi\rangle,\sigma_s^{jk}\right)
+\sum\limits_{1\leqslant l \leqslant d-1}F\left(|\varphi\rangle,\sigma^{l}\right)= 2(d-1).
\end{equation}
\end{proof}

\begin{lemma}
For any 2-qudit state $\rho^{ab}\in\mathcal{H}_{ab}$ with $dim(\mathcal{H}_a)=dim(\mathcal{H}_b)=d$,
\begin{equation}\label{2-qudit}
\begin{array}{rl}
& \ \ \ \ \ \sum\limits_{0\leqslant j<k\leqslant d-1}
F\left(\rho^{ab},\left(\sigma_t^{jk}\right)^{a}\otimes\mathbb{I}^{b}+\mathbb{I}^{a}\otimes\left(\sigma_t^{jk}\right)^{b}\right)+
\sum\limits_{0\leqslant j<k\leqslant d-1}
F\left(\rho^{ab},\left(\sigma_s^{jk}\right)^{a}\otimes\mathbb{I}^{b}+\mathbb{I}^{a}\otimes\left(\sigma_s^{jk}\right)^{b}\right)\\
& \ \ +
\sum\limits_{1\leqslant l\leqslant d-1}
F\left(\rho^{ab},\left(\sigma^{l}\right)^{a}\otimes\mathbb{I}^{b}+\mathbb{I}^{a}\otimes\left(\sigma^{l}\right)^{b}\right)
\leqslant\frac{4(d-1)(d+2)}{d},
\end{array}
\end{equation}
where $\sigma_a^{jk}, \sigma_s^{jk}$, and $\sigma^{l}$ are Gell-Mann matrices
defined in \eqref{gell-1}, \eqref{gell-2}, and \eqref{gell-3}.
\end{lemma}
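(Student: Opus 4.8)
The plan is to mirror the proof of Lemma~\ref{FISHER2}. By the convexity of the quantum Fisher information it suffices to prove \eqref{2-qudit} for an arbitrary pure state $|\psi\rangle\in\mathcal{H}_a\otimes\mathcal{H}_b$, since for $\rho^{ab}=\sum_j\lambda_j|\psi_j\rangle\langle\psi_j|$ the right-hand side constant is preserved under the convex combination. For a pure state property~(3), namely \eqref{fisher2}, replaces every Fisher information by a variance. To keep the bookkeeping manageable I would relabel the full family of $d^2-1$ Gell-Mann matrices $\{\sigma_t^{jk},\sigma_s^{jk},\sigma^l\}$ as $\{G_\alpha\}_{\alpha=1}^{d^2-1}$ and set $J_\alpha=G_\alpha^a\otimes\mathbb{I}^b+\mathbb{I}^a\otimes G_\alpha^b$, so that the left-hand side of \eqref{2-qudit} is exactly $\sum_\alpha(\Delta J_\alpha)^2=\sum_\alpha\langle J_\alpha^2\rangle-\sum_\alpha\langle J_\alpha\rangle^2$, where all expectations are taken in $|\psi\rangle$.

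The heart of the argument is to evaluate these sums through two operator identities valid for the present normalization ($\tr G_\alpha^2=2$). First, a direct computation gives the completeness identity $\sum_\alpha G_\alpha^2=\frac{2(d^2-1)}{d}\mathbb{I}$: one checks $\sum_{0\leqslant j<k\leqslant d-1}(\sigma_t^{jk})^2=\sum_{0\leqslant j<k\leqslant d-1}(\sigma_s^{jk})^2=(d-1)\mathbb{I}$ because each index appears in exactly $d-1$ pairs, while $\sum_{1\leqslant l\leqslant d-1}(\sigma^l)^2=\frac{2(d-1)}{d}\mathbb{I}$. Second, the Fierz-type relation $\sum_\alpha G_\alpha\otimes G_\alpha=2V-\frac{2}{d}\mathbb{I}\otimes\mathbb{I}$, where $V$ denotes the swap operator $V(|x\rangle\otimes|y\rangle)=|y\rangle\otimes|x\rangle$. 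Expanding $J_\alpha^2=G_\alpha^2\otimes\mathbb{I}+\mathbb{I}\otimes G_\alpha^2+2\,G_\alpha\otimes G_\alpha$ and summing over $\alpha$ then gives $\sum_\alpha\langle J_\alpha^2\rangle=\frac{4d^2-8}{d}+4\langle V\rangle$.

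To conclude I would discard the nonnegative term $\sum_\alpha\langle J_\alpha\rangle^2\geqslant0$ and use that $V$ is a Hermitian involution, so its eigenvalues are $\pm1$ and $\langle V\rangle\leqslant1$; this yields $\sum_\alpha(\Delta J_\alpha)^2\leqslant\frac{4d^2-8}{d}+4=\frac{4(d-1)(d+2)}{d}$, which is the assertion, with equality approached by the maximally entangled state, for which $\langle V\rangle=1$ and every $\langle J_\alpha\rangle=0$. A brute-force alternative that parametrizes $|\psi\rangle=\sum_{s,t}c_{st}|st\rangle$ and sums the variances term by term, as in Lemma~\ref{FISHER3}, is possible but far messier. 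The main obstacle I anticipate is establishing the two operator identities cleanly—especially the swap identity—in this normalization; once they are in place, the bound follows at once from dropping a nonnegative term and the elementary estimate $\langle V\rangle\leqslant1$.
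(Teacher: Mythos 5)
Your proposal is correct, and it takes a genuinely different route from the paper. The paper's own proof is the brute-force alternative you mention at the end: it expands $|\varphi\rangle=\sum_{m,n}\varphi_{mn}|mn\rangle$ and writes out each of the three sums of variances in components, arriving at expressions like $2(d-1)+2\sum_{j<k}(\varphi_{jj}^*\varphi_{kk}+\cdots)-[\cdots]^2$ before asserting the bound (the final estimate collapsing these component expressions to $\frac{4(d-1)(d+2)}{d}$ is in fact left largely implicit there). Your argument replaces all of that with two clean operator identities: the completeness relation $\sum_\alpha G_\alpha^2=\frac{2(d^2-1)}{d}\,\mathbb{I}$ and the swap (Fierz) identity $\sum_\alpha G_\alpha\otimes G_\alpha=2V-\frac{2}{d}\,\mathbb{I}\otimes\mathbb{I}$, both of which are standard consequences of $\{ \sqrt{2/d}\,\mathbb{I}\}\cup\{G_\alpha\}$ being an orthogonal Hermitian basis with $\tr(G_\alpha G_\beta)=2\delta_{\alpha\beta}$, and both of which check out in this normalization (e.g.\ for $d=2$ the swap identity is the familiar $V=\frac12(\mathbb{I}\otimes\mathbb{I}+\sum_i\sigma_i\otimes\sigma_i)$). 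From $\sum_\alpha\langle J_\alpha^2\rangle=\frac{4d^2-8}{d}+4\langle V\rangle$, dropping $\sum_\alpha\langle J_\alpha\rangle^2\geqslant0$ and using $\langle V\rangle\leqslant1$ gives the bound at once, and your identification of the equality case (the maximally entangled state, where $\langle V\rangle=1$ and every $\langle J_\alpha\rangle=0$ by tracelessness of the $G_\alpha$) matches the remark following the lemma in the paper. What your approach buys is transparency and rigor in the final step, plus an explanation of \emph{why} the maximum is $\frac{4(d-1)(d+2)}{d}$ (it is $\frac{4d^2-8}{d}+4\sup\langle V\rangle$); what the paper's computation buys is explicit component formulas that it reuses in the tripartite corollaries and examples. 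The convexity reduction to pure states and the use of property (3) are the same in both.
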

\begin{proof}
By the convexity of quantum Fisher information, we only need to prove the inequality holds for pure states.
Any pure state $|\varphi\rangle\in\mathcal{H}_{ab}$ can be expressed as
$|\varphi\rangle=\sum\limits_{m,n=0}^{d-1}\varphi_{mn}|mn\rangle$ with $\sum\limits_{m,n=0}^{d-1}|\varphi_{mn}|^2=1$.
Then one has
\begin{equation}
\begin{array}{rl}
& \ \ \ \ \sum\limits_{0\leqslant j<k\leqslant d-1}
F\left(|\varphi\rangle,\left(\sigma_t^{jk}\right)^{a}\otimes\mathbb{I}^{b}+\mathbb{I}^{a}\otimes\left(\sigma_t^{jk}\right)^{b}\right)\\
&=\sum\limits_{0\leqslant j<k\leqslant d-1}
\left(\sum\limits_{t=0}^{d-1}\left(|\varphi_{jt}|^2+|\varphi_{tj}|^2+|\varphi_{kt}|^2+|\varphi_{tk}|^2\right)
+2\left(\varphi_{jj}^*\varphi_{kk}+\varphi_{jj}\varphi_{kk}^*+\varphi_{jk}^*\varphi_{kj}+\varphi_{kj}^*\varphi_{jk}\right)\right)\\
& \ \ \ \ -\sum\limits_{0\leqslant j<k\leqslant d-1}\left[\sum\limits_{t=0}^{d-1}\left(\varphi_{jt}^*\varphi_{kt}+
\varphi_{jt}\varphi_{kt}^*+\varphi_{tj}^*\varphi_{tk}+\varphi_{tj}\varphi_{tk}^*\right)\right]^2\\
&=2(d-1)\\
& \ \ \ \ +2\sum\limits_{0\leqslant j<k\leqslant d-1}\left(
\left(\varphi_{jj}^*\varphi_{kk}+\varphi_{jj}\varphi_{kk}^*+\varphi_{jk}^*\varphi_{kj}+\varphi_{kj}^*\varphi_{jk}\right)
-\left[\sum\limits_{t=0}^{d-1}\left(\varphi_{jt}^*\varphi_{kt}+
\varphi_{jt}\varphi_{kt}^*+\varphi_{tj}^*\varphi_{tk}+\varphi_{tj}\varphi_{tk}^*\right)\right]^2\right),
\end{array}
\end{equation}
\begin{equation}
\begin{array}{rl}
& \ \ \ \ \sum\limits_{0\leqslant j<k\leqslant d-1}
F\left(|\varphi\rangle,\left(\sigma_s^{jk}\right)^{a}\otimes\mathbb{I}^{b}+\mathbb{I}^{a}\otimes\left(\sigma_s^{jk}\right)^{b}\right)\\
&=\sum\limits_{0\leqslant j<k\leqslant d-1}\left(
\sum\limits_{t=0}^{d-1}\left(|\varphi_{jt}|^2+|\varphi_{tj}|^2+|\varphi_{kt}|^2+|\varphi_{tk}|^2\right)
+2\left(\varphi_{jk}^*\varphi_{kj}+\varphi_{kj}^*\varphi_{jk}-\varphi_{jj}^*\varphi_{kk}-\varphi_{jj}\varphi_{kk}^*\right)\right)\\
& \ \ \ \ -\sum\limits_{0\leqslant j<k\leqslant d-1}
\left[\sum\limits_{t=0}^{d-1}\left(\left(i\varphi_{kt}^*\varphi_{jt}-i\varphi_{kt}\varphi_{jt}^*\right)
+\left(i\varphi_{tk}^*\varphi_{tj}-i\varphi_{tk}\varphi_{tj}^*\right)\right)\right]^2\\
&=2(d-1)\\
& \ \ \ \ +\sum\limits_{0\leqslant j<k\leqslant d-1}\left(
2\left(\varphi_{jk}^*\varphi_{kj}+\varphi_{kj}^*\varphi_{jk}-\varphi_{jj}^*\varphi_{kk}-\varphi_{jj}\varphi_{kk}^*\right)
-\left[\sum\limits_{t=0}^{d-1}\left(i\left(\varphi_{kt}^*\varphi_{jt}-\varphi_{kt}\varphi_{jt}^*\right)
+i\left(\varphi_{tk}^*\varphi_{tj}-\varphi_{tk}\varphi_{tj}^*\right)\right)\right]^2\right),
\end{array}
\end{equation}
and
\begin{equation}
\begin{array}{rl}
& \ \ \ \ \sum\limits_{1\leqslant l\leqslant d-1}
F\left(\rho,\left(\sigma^{l}\right)^{a}\otimes\mathbb{I}^{b}+\mathbb{I}^{a}\otimes\left(\sigma^{l}\right)^{b}\right)\\
&=\frac{4d-8}{d}+4\sum\limits_{m=0}^{d-1}|\varphi_{mm}|^2-\sum\limits_{1\leqslant l\leqslant d-1}
\frac{2}{l(l+1)}\left[\sum\limits_{t=0}^{d-1}\sum\limits_{j=0}^{l-1}\left(|\varphi_{jt}|^2+
|\varphi_{tj}|^-l|\varphi_{lt}|^2-l|\varphi_{tl}|^2\right)\right]^2.
\end{array}
\end{equation}
Thus,
\begin{equation}
\begin{array}{rl}
& \ \ \ \ \ \sum\limits_{0\leqslant j<k\leqslant d-1}
F\left(|\varphi\rangle,\left(\sigma_t^{jk}\right)^{a}\otimes\mathbb{I}^{b}+\mathbb{I}^{a}\otimes\left(\sigma_t^{jk}\right)^{b}\right)+
\sum\limits_{0\leqslant j<k\leqslant d-1}
F\left(|\varphi\rangle,\left(\sigma_s^{jk}\right)^{a}\otimes\mathbb{I}^{b}+\mathbb{I}^{a}\otimes\left(\sigma_s^{jk}\right)^{b}\right)\\
& \ \ +
\sum\limits_{1\leqslant l\leqslant d-1}
F\left(|\varphi\rangle,\left(\sigma^{l}\right)^{a}\otimes\mathbb{I}^{b}+\mathbb{I}^{a}\otimes\left(\sigma^{l}\right)^{b}\right)
\leqslant\frac{4(d-1)(d+2)}{d}.
\end{array}
\end{equation}
\end{proof}
Here, we note that the equality in \eqref{2-qudit} can hold when $\varphi_{mm}=\frac{1}{\sqrt{d}}$ for $i=0,1,\ldots,d-1$.

Now consider tripartite states $\rho_{abc}$ in systems $a$, $b$ and $c$.
Let $\{A_{\mu}\}$, $\{B_{\mu}\}$ and $\{C_{\mu}\}$ be the sets of local observables with respect to subsystems $a$, $b$ and $c$, respectively.
From Ref. \cite{Fisher5}, there exists $F_X$ and $F_{XY}$ such that
\begin{equation}\label{fisher4}
\sum_\mu F(\rho^X,X_{\mu})\leqslant F_{X},
\end{equation}
for any reduced local state $\rho^X$, where $X$ stands for any subsystem $a$, $b$ or $c$,
and
\begin{equation}\label{fisher9}
\sum_\mu F(\rho^{XY},A_{\mu}\otimes \mathbb{I}^Y+\mathbb{I}^X\otimes B_{\mu})\leqslant F_{XY},
\end{equation}
where $\rho^{XY}$ stand for reduced state associated with the subsystems $XY$, where $XY\in\{ab,ac,bc\}$.
Here, we note that $F_X$ and $F_{XY}$ are only depend on the local observable $X$ and $XY$, respectively.

Though there are different criteria, such as entanglement witness and other methods,
one often considers the genuine entanglement criteria based on biseparable state.

Let $\mathcal{H}_X$ denote the Hilbert space of the systems $X$. Consider tripartite
states $\rho^{abc}$ in $\mathcal{H}_a\otimes\mathcal{H}_b\otimes\mathcal{H}_c$ with $\dim\mathcal{H}_a=\dim \mathcal{H}_b=\dim \mathcal{H}_c$.
$\rho^{abc}$ is said to be genuine entangled if it cannot be written in the following form,
\begin{equation}\label{bs}
\rho^{abc}=\sum\limits_ip_i\rho_i^a\otimes\rho_i^{bc}+\sum\limits_jq_j\rho_j^b\otimes\rho_j^{ac}
+\sum\limits_lr_l\rho_l^{ab}\otimes\rho_l^c.
\end{equation}
A state of the form (\ref{bs}) is called bi-separable. For a  bi-separable state, it can be verified that
$F_a=F_b=F_c$ and $F_{ab}=F_{ac}=F_{bc}$.

\begin{theorem}
A tripartite state $\rho^{abc}\in\mathcal{H}_a\otimes\mathcal{H}_b\otimes\mathcal{H}_c$ with $\mathcal{H}_a=\mathcal{H}_b=\mathcal{H}_c=d$
is genuine entangled if
\begin{equation}\label{thm-abc}
\sum\limits_{\mu}F(\rho^{abc},A_\mu\otimes\mathbb{I}^{bc}+B_\mu\otimes\mathbb{I}^{ac}+\mathbb{I}^{ab}\otimes C_\mu)> F_1+F_2,
\end{equation}
where $F_1=F_a=F_b=F_c$, and $F_2=F_{ab}=F_{ac}=F_{bc}$.
\end{theorem}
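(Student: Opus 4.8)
The plan is to establish the contrapositive: I will show that every biseparable state of the form \eqref{bs} satisfies $\sum_\mu F(\rho^{abc},M_\mu)\le F_1+F_2$, where $M_\mu:=A_\mu\otimes\mathbb{I}^{bc}+B_\mu\otimes\mathbb{I}^{ac}+\mathbb{I}^{ab}\otimes C_\mu$ denotes the global observable appearing in \eqref{thm-abc}; then any state violating this bound cannot be biseparable and is therefore genuine entangled. Because $\rho^{abc}$ in \eqref{bs} is a convex mixture of three families of states that are products across the cuts $a|bc$, $b|ac$ and $ab|c$, the convexity property of the quantum Fisher information immediately reduces the task to bounding $\sum_\mu F(\,\cdot\,,M_\mu)$ on a single representative product state drawn from each family.

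First I would treat a term $\rho^a\otimes\rho^{bc}$ that is a product across the cut $a|bc$. Regrouping $M_\mu$ along this cut, it reads $M_\mu=A_\mu\otimes\mathbb{I}^{bc}+\mathbb{I}^a\otimes\big(B_\mu\otimes\mathbb{I}^c+\mathbb{I}^b\otimes C_\mu\big)$, i.e.\ a local observable on $a$ plus a single observable $B_\mu\otimes\mathbb{I}^c+\mathbb{I}^b\otimes C_\mu$ on the composite block $bc$. The additivity property then gives $F(\rho^a\otimes\rho^{bc},M_\mu)=F(\rho^a,A_\mu)+F\big(\rho^{bc},B_\mu\otimes\mathbb{I}^c+\mathbb{I}^b\otimes C_\mu\big)$, and summing over $\mu$ together with the defining inequalities \eqref{fisher4} and \eqref{fisher9} yields $\sum_\mu F(\rho^a\otimes\rho^{bc},M_\mu)\le F_a+F_{bc}=F_1+F_2$. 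The two remaining families $\rho^b\otimes\rho^{ac}$ and $\rho^{ab}\otimes\rho^c$ are handled in exactly the same way, regrouping $M_\mu$ across $b|ac$ and $ab|c$; each yields at most $F_b+F_{ac}$ and $F_{ab}+F_c$ respectively, both equal to $F_1+F_2$ thanks to the identities $F_a=F_b=F_c=F_1$ and $F_{ab}=F_{ac}=F_{bc}=F_2$ recorded just before the theorem.

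Finally I would reassemble the convex combination. Since the weights in \eqref{bs} satisfy $\sum_i p_i+\sum_j q_j+\sum_l r_l=1$, convexity gives $\sum_\mu F(\rho^{abc},M_\mu)\le\big(\sum_i p_i+\sum_j q_j+\sum_l r_l\big)(F_1+F_2)=F_1+F_2$, which is the required bound for biseparable states. I expect the main obstacle to be the additivity step rather than any computation: applying it legitimately requires that $M_\mu$ decompose, across each fixed bipartition, as $X\otimes\mathbb{I}+\mathbb{I}\otimes Y$ with $X$ and $Y$ honest observables on the two blocks — which holds precisely because $M_\mu$ is a sum of single-party observables — so that \eqref{fisher9} can be invoked with the two-party block viewed as one subsystem. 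The second point worth verifying is that the ceiling $F_1+F_2$ is common to all three bipartitions rather than splitting into three different values; this is exactly what the equalities $F_a=F_b=F_c$ and $F_{ab}=F_{ac}=F_{bc}$ guarantee.
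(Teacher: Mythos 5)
Your proposal is correct and follows essentially the same route as the paper's own proof: convexity of the quantum Fisher information reduces the bound to the product terms of the biseparable decomposition, additivity splits each such term across its bipartition into a single-party and a two-party contribution, and the bounds \eqref{fisher4} and \eqref{fisher9} together with $F_a=F_b=F_c=F_1$ and $F_{ab}=F_{ac}=F_{bc}=F_2$ give $F_1+F_2$ after reassembling the convex weights. No substantive difference from the paper's argument.
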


\begin{proof}
By the additivity and convexity of quantum Fisher information, if $\rho$ is defined as \eqref{bs}, we have
$$
\begin{array}{rl}
& \ \ \ \ \ \ \sum\limits_{\mu}F(\rho^{abc},A_{\mu}\otimes\mathbb{I}^{bc}+B_{\mu}\otimes\mathbb{I}^{ac}+\mathbb{I}^{ab}\otimes C_{\mu})\\
& \ \ \leqslant \sum\limits_{\mu}\sum\limits_ip_i\sum F(\rho_i^a\otimes\rho_i^{bc},A_{\mu}\otimes\mathbb{I}^{bc}+
B_{\mu}\otimes\mathbb{I}^{ac}+\mathbb{I}^{ab}\otimes C_{\mu})\\
& \ \ \ \ +\sum\limits_{\mu}\sum\limits_jq_j\sum F(\rho_j^b\otimes\rho_j^{ac},A_{\mu}\otimes\mathbb{I}^{bc}+
B_{\mu}\otimes\mathbb{I}^{ac}+\mathbb{I}^{ab}\otimes C_{\mu})\\
& \ \ \ \ +\sum\limits_{\mu}\sum\limits_lr_l\sum F(\rho_j^{ab}\otimes\rho_j^{c},A_{\mu}\otimes\mathbb{I}^{bc}+
B_{\mu}\otimes\mathbb{I}^{ac}+\mathbb{I}^{ab}\otimes C_{\mu})\\
& \ \ = \sum\limits_ip_i\sum\limits_{\mu}(F(\rho_i^a,A_{\mu})+F(\rho_i^{bc},B_{\mu}\otimes\mathbb{I}^{c}+\mathbb{I}^{b}\otimes C_{\mu}))
+\sum\limits_jq_j\sum\limits_{\mu}(F(\rho_j^b,B_{\mu})+F(\rho_j^{ac},A_{\mu}\otimes\mathbb{I}^{c}+\mathbb{I}^{a}\otimes C_{\mu}))\\
& \ \ \ \ +\sum\limits_lr_l\sum\limits_{\mu}(F(\rho_l^c,C_{\mu})+F(\rho_l^{ab},A_{\mu}\otimes\mathbb{I}^{b}+\mathbb{I}^{a}\otimes B_{\mu}))\\
& \ \ \leqslant \sum\limits_ip_i(F_A+F_{BC})+\sum\limits_jq_j(F_{B}+F_{AC})
+\sum\limits_lr_l(F_{c}+F_{ab})\\
& \ \ = F_1+F_2.
\end{array}
$$
\end{proof}

If we choose $A_\mu$, $B_\mu$, and $C_\mu$ in \eqref{thm-abc} as Gell-Mann matrices,
from Lemmas \ref{FISHER1}, \ref{FISHER2} one gets the following two corollaries.

\begin{corollary}\label{GHZ}
A tripartite state $\rho^{abc}\in\mathcal{H}_a\otimes\mathcal{H}_b\otimes\mathcal{H}_c$ with $\mathcal{H}_a=\mathcal{H}_b=\mathcal{H}_c=d$
is genuine entangled if
\begin{equation}
\begin{array}{rl}
& \ \ \ \ \sum\limits_{0\leqslant j<k\leqslant d-1}
F\left(\rho^{abc},\left(\sigma_t^{jk}\right)^{a}\otimes\mathbb{I}^{bc}+\mathbb{I}^{ac}\otimes\left(\sigma_t^{jk}\right)^{b}
+\mathbb{I}^{ab}\otimes\left(\sigma_t^{jk}\right)^{c}\right)\\
&+
\sum\limits_{0\leqslant j<k\leqslant d-1}
F\left(\rho^{abc},\left(\sigma_s^{jk}\right)^{a}\otimes\mathbb{I}^{bc}+\mathbb{I}^{ac}\otimes\left(\sigma_s^{jk}\right)^{b}
+\mathbb{I}^{ab}\otimes\left(\sigma_s^{jk}\right)^{c}\right)\\
&+
\sum\limits_{1\leqslant l\leqslant d-1}
F\left(\rho^{abc},\left(\sigma^{l}\right)^{a}\otimes\mathbb{I}^{bc}+\mathbb{I}^{ac}\otimes\left(\sigma^{l}\right)^{b}
+\mathbb{I}^{ab}\otimes\left(\sigma^{l}\right)^{c}\right)\\
&> \frac{2(d-1)(3d+4)}{d},
\end{array}
\end{equation}
\end{corollary}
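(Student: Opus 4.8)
The plan is to derive this corollary as an immediate specialization of the Theorem, taking each of the local observable families $\{A_\mu\}$, $\{B_\mu\}$, $\{C_\mu\}$ to be the \emph{same} complete set of Gell-Mann matrices $\{\sigma_t^{jk},\sigma_s^{jk},\sigma^{l}\}$ on subsystems $a$, $b$, $c$ respectively. Concretely, I would let the index $\mu$ run over the $d^2-1$ Gell-Mann generators, and for each such $\mu$ set $A_\mu$, $B_\mu$, $C_\mu$ equal to that same generator on the corresponding party. With this identification the collective observable appearing inside $\sum_\mu F(\rho^{abc},\,\cdot\,)$ in \eqref{thm-abc} is exactly the three-term sum written on the left-hand side of the corollary (split into the $\sigma_t^{jk}$, $\sigma_s^{jk}$, and $\sigma^{l}$ blocks). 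It then remains only to evaluate the two constants $F_1$ and $F_2$ that define the threshold.

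First I would pin down $F_1$. In the notation of \eqref{fisher4}, $F_1=F_a=F_b=F_c$ is the universal upper bound for the single-party sum $\sum_\mu F(\rho^X,X_\mu)$ taken over the Gell-Mann matrices, for any reduced qudit state $\rho^X$. This is precisely the content of Lemma \ref{FISHER3}, which yields $F_1=2(d-1)$ irrespective of which subsystem is retained.

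Next I would pin down $F_2$. In the notation of \eqref{fisher9}, $F_2=F_{ab}=F_{ac}=F_{bc}$ is the universal upper bound for $\sum_\mu F(\rho^{XY},X_\mu\otimes\mathbb{I}^Y+\mathbb{I}^X\otimes Y_\mu)$ over the Gell-Mann matrices, and this is supplied by inequality \eqref{2-qudit}, giving $F_2=\frac{4(d-1)(d+2)}{d}$. Because one and the same Gell-Mann family is used on all three parties, the symmetry hypotheses $F_1=F_a=F_b=F_c$ and $F_2=F_{ab}=F_{ac}=F_{bc}$ demanded by the Theorem are satisfied automatically, so the Theorem applies with no further checking.

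Finally I would combine the two constants. Substituting into the threshold $F_1+F_2$ of \eqref{thm-abc} and simplifying,
\begin{equation*}
F_1+F_2=2(d-1)+\frac{4(d-1)(d+2)}{d}=\frac{(d-1)\bigl(2d+4(d+2)\bigr)}{d}=\frac{(d-1)(6d+8)}{d}=\frac{2(d-1)(3d+4)}{d},
\end{equation*}
which is exactly the stated bound, and the claim follows at once from the Theorem. The only point needing any care is this closing arithmetic recombination (factoring out $(d-1)$ and verifying $2d+4(d+2)=6d+8$); every other step is a routine bookkeeping identification of the two previously established lemmas with the abstract constants $F_1$ and $F_2$, so I do not anticipate a real obstacle here.
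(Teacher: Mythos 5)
Your proposal is correct and follows exactly the route the paper intends: specialize Theorem 1 to the Gell-Mann observables, read off $F_1=2(d-1)$ from Lemma \ref{FISHER3} and $F_2=\frac{4(d-1)(d+2)}{d}$ from inequality \eqref{2-qudit}, and add them to get $\frac{2(d-1)(3d+4)}{d}$ (the paper's citation of Lemmas \ref{FISHER1} and \ref{FISHER2} at that point is evidently a typo for the two qudit lemmas you used). Your closing arithmetic is verified, so nothing is missing.
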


\begin{corollary}\label{FISHER}
For a three-qubit state $\rho^{abc}$, if
\begin{equation}\label{fisher12}
\sum\limits_{i=1}^3F(\rho^{abc},\pm\sigma_i\otimes\mathbb{I}^{bc}\pm\sigma_i\otimes\mathbb{I}^{ac}\pm\mathbb{I}^{ab}\otimes \sigma_i)>10,
\end{equation}
then $\rho^{ABC}$ is genuine entangled.
\end{corollary}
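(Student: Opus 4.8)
The plan is to obtain Corollary \ref{FISHER} as the qubit specialization of the Theorem. First I would apply the Theorem in dimension $d=2$ with the local observables taken to be the Pauli matrices, setting $A_\mu=B_\mu=C_\mu=\pm\sigma_\mu$ for $\mu=1,2,3$. The Theorem then reduces the claim to verifying that the biseparable threshold $F_1+F_2$ equals $10$, where $F_1=F_a=F_b=F_c$ bounds the one-party sums $\sum_\mu F(\rho^X,\pm\sigma_\mu)$ and $F_2=F_{ab}=F_{ac}=F_{bc}$ bounds the two-party sums $\sum_\mu F\big(\rho^{XY},\pm\sigma_\mu\otimes\mathbb{I}+\mathbb{I}\otimes(\pm\sigma_\mu)\big)$.

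Second, I would read off the two constants from the qubit lemmas. Lemma \ref{FISHER1} gives $\sum_{\mu=1}^3 F(\rho^X,\sigma_\mu)\le 2$ for every single-qubit reduced state, so $F_1=2$; Lemma \ref{FISHER2} gives $\sum_{\mu=1}^3 F(\rho^{XY},\sigma_\mu\otimes\mathbb{I}+\mathbb{I}\otimes\sigma_\mu)\le 8$ for every two-qubit reduced state, so $F_2=8$. Hence $F_1+F_2=2+8=10$, which is exactly the right-hand side of \eqref{fisher12}, and the Theorem immediately yields genuine tripartite entanglement whenever the left-hand side exceeds $10$. I note in passing that this is the $d=2$ instance of Corollary \ref{GHZ}, whose threshold $\tfrac{2(d-1)(3d+4)}{d}$ specializes to $10$.

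The one point requiring care, and the step I regard as the main obstacle, is the treatment of the $\pm$ signs, since Lemmas \ref{FISHER1} and \ref{FISHER2} are stated for the unsigned Pauli matrices. The enabling fact is that the quantum Fisher information is invariant under $A\mapsto -A$: from the defining relation below \eqref{fisher1} the symmetric logarithmic derivative of $-A$ is $-L$, whence $F(\rho,-A)=\tfrac14\tr\rho L^2=F(\rho,A)$; equivalently, neither $|\langle k|A|l\rangle|^2$ nor the pure-state variance in \eqref{fisher2} changes when $A\mapsto -A$. This invariance makes every one-party summand independent of its sign, so $F_1=2$ for any choice of signs. For the two-party constant the subtlety is sharper: Lemma \ref{FISHER2} bounds $\sigma_\mu\otimes\mathbb{I}+\mathbb{I}\otimes\sigma_\mu$, whereas a cut carrying mismatched signs on its two factors would instead involve $\sigma_\mu\otimes\mathbb{I}-\mathbb{I}\otimes\sigma_\mu$, for which the bound $8$ can fail (the two-qubit singlet already gives $\sum_\mu F=12$). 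I would therefore attach, for each index $\mu$, a single common sign $\epsilon_\mu\in\{+,-\}$ to all three parties, so that on every bipartition the relevant observable is $\epsilon_\mu(\sigma_\mu\otimes\mathbb{I}+\mathbb{I}\otimes\sigma_\mu)$; by the invariance just established its Fisher information equals that of the unsigned observable, and Lemma \ref{FISHER2} then delivers $F_2=8$, completing the proof.
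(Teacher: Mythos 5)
Your derivation is exactly the paper's intended one: the paper gives no separate argument for Corollary \ref{FISHER} beyond the remark that it follows from Theorem 1 together with Lemmas \ref{FISHER1} and \ref{FISHER2}, and your identification $F_1+F_2=2+8=10$ (equivalently the $d=2$ case of Corollary \ref{GHZ}) is precisely that argument. Where you genuinely depart from the paper is in the sign analysis, and this is not mere bookkeeping. The paper reads the $\pm$ in \eqref{fisher12} as allowing independent signs on the three parties --- Example 1 takes $A_3=B_3=-C_3=\sigma_3$ --- but then on the cut $a|bc$ the induced two-party observable for $i=3$ is $\sigma_3\otimes\mathbb{I}-\mathbb{I}\otimes\sigma_3$, to which Lemma \ref{FISHER2} does not apply. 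Your worry is substantiated: conjugating one qubit by $\sigma_3$ shows that the sign pattern $(+,+,-)$ on a two-qubit cut is locally unitarily equivalent to $(-,-,-)$, so the relevant two-party sum can reach $12$ rather than $8$ (your singlet gives $12$ for the all-antisymmetric pattern; $|\psi^+\rangle=\tfrac{1}{\sqrt2}(|01\rangle+|10\rangle)$ gives $12$ for the pattern of Example 1). Consequently the biseparable product state $\rho^a\otimes|\psi^+\rangle\langle\psi^+|^{bc}$ with $\rho^a$ pure attains $2+12=14>10$ under the sign choice of Example 1, so the corollary with fully independent signs is actually violated by a biseparable state. Your restriction to a single common sign $\epsilon_\mu$ per index, together with the invariance $F(\rho,-A)=F(\rho,A)$, is therefore a necessary correction rather than a redundant precaution; with it your proof is complete and correct, though it establishes a strictly narrower statement than the one the paper states and subsequently uses.
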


Let us consider the following examples.
\begin{example}
Consider the mixture of the three-qubit GHZ state and $W$ state,
\begin{equation}\label{tq}
\rho=\frac{1-x-y}{8}I+x|GHZ\rangle\langle GHZ|+y|W\rangle\langle W|,
\end{equation}
where $|GHZ\rangle=\frac{1}{\sqrt{2}}(|000\rangle+|111\rangle)$, and $|W\rangle=\frac{1}{\sqrt{3}}(|100\rangle+|010\rangle+|001\rangle)$.
Set $A_1=B_1=C_1=\sigma_1$, $A_2=B_2=C_2=\sigma_2$, and $A_3=B_3=-C_3=\sigma_3$.
Then we have
$$
\begin{array}{rl}
& \ \ \ \ \ \ F(\rho,A_1\otimes\mathbb{I}^{bc}+B_1\otimes\mathbb{I}^{ac}+\mathbb{I}^{ab}\otimes C_1)\\[1mm]
& \ \ =F(\rho,A_2\otimes\mathbb{I}^{bc}+B_2\otimes\mathbb{I}^{ac}+\mathbb{I}^{ab}\otimes C_2)\\[1mm]
& \ \ =\displaystyle\frac{6x^2}{1+3x-y}+\frac{22y^2}{1+3y-x}+\frac{6(x-y)^2}{1+3x+3y},
\end{array}
$$
and
$$
\begin{array}{rl}
& \ \ \ \ \ F(\rho,A_3\otimes\mathbb{I}^{bc}+B_3\otimes\mathbb{I}^{ac}+\mathbb{I}^{ab}\otimes C_3)\\[1mm]
& \ \ =\displaystyle\frac{4x^2}{1+3x-y}+\frac{128y^2}{9(1+3y-x)}.
\end{array}
$$
Thus
$$
\begin{array}{rl}
& \ \ \ \ \ \ \sum\limits_{\mu}F(\rho,A_{\mu}\otimes\mathbb{I}^{bc}+B_{\mu}\otimes\mathbb{I}^{ac}+\mathbb{I}^{ab}\otimes C_{\mu})\\
& \ \ =\displaystyle\frac{16x^2}{1+3x-y}+\frac{524y^2}{9(1+3y-x)}+\frac{12(x-y)^2}{1+3x+3y}.
\end{array}
$$
Denote $f(x,y)=\frac{16x^2}{1+3x-y}+\frac{524y^2}{9(1+3y-x)}+\frac{12(x-y)^2}{1+3x+3y}-10$.
From Corollary 1 the three-qubit state (\ref{tq}) is genuine entangled if $f(x,y)>0$, see Fig. 1.

\begin{figure}
  \centering
  \includegraphics[width=6cm]{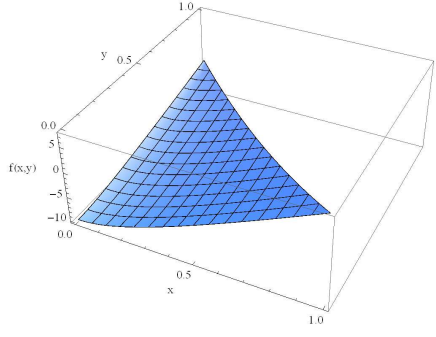}
  \caption{$f(x,y)$ v.s. $x$ and $y$. Here, $f(x,y)$ stands for the function defined in Example 1.}
\end{figure}

We can see that for some states our method is more efficient.
For instance, take $x=0$. Then
\begin{equation}\label{tq0}
\rho=\frac{1-y}{8}I+y|W\rangle\langle W|, \ 0\leqslant y\leqslant1
\end{equation}
is genuine entangled when $y>0.647236$, see Fig. 2.

\begin{figure}
  \centering
  \includegraphics[width=6cm]{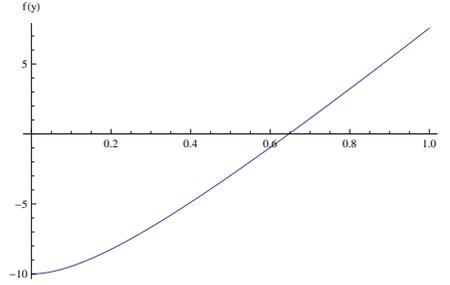}
  \caption{From Lemma \ref{FISHER}, $\rho$ is genuine entangled if $f(0,y)=\frac{632y^2}{9(3y+1)}-10>0$.
  From the figure above, one can see that it means when $y>0.647236$, $\rho$ is genuine entangled.}
\end{figure}

It has been obtained in \cite{Clivaz} and \cite{Lim3} that the state (\ref{tq0}) is genuine entangled for $y>0.90$ and $y>0.738549$, respectively. Obviously, our criterion is better
than the one presented in \cite{Clivaz} and \cite{Lim3} for detecting the genuine entanglement of the state defined in (\ref{tq0}).
Thus, better than Vicente criterion by Theorem 2 in \cite{Huber3},
since the criteria proposed in \cite{Lim3} is more efficient of the state defined in \eqref{tq0}.
\end{example}

\begin{example}
Consider quantum state $\rho\in\mathcal{H}_a\otimes\mathcal{H}_b\otimes\mathcal{H}_c$,
$\rho=\frac{1-p}{d^3}\mathbb{I}+p|GHZ\rangle\langle GHZ|$,
where $|GHZ\rangle=\frac{1}{\sqrt{d}}\sum\limits_{j=0}^{d-1}|jjj\rangle$  is the Greenberger-Horne-Zeilinger (GHZ) state.
Then one has
\begin{equation}
\sum\limits_{0\leqslant j<k\leqslant d-1}
F\left(|GHZ\rangle,\left(\sigma_t^{jk}\right)^{a}\otimes\mathbb{I}^{bc}+\mathbb{I}^{ac}\otimes\left(\sigma_t^{jk}\right)^{b}
+\mathbb{I}^{ab}\otimes\left(\sigma_t^{jk}\right)^{c}\right)=3(d-1),
\end{equation}
\begin{equation}
\sum\limits_{0\leqslant j<k\leqslant d-1}
F\left(|GHZ\rangle,\left(\sigma_s^{jk}\right)^{a}\otimes\mathbb{I}^{bc}+\mathbb{I}^{ac}\otimes\left(\sigma_s^{jk}\right)^{b}
+\mathbb{I}^{ab}\otimes\left(\sigma_s^{jk}\right)^{c}\right)=3(d-1),
\end{equation}
and
\begin{equation}
\sum\limits_{1\leqslant l\leqslant d-1}
F\left(|GHZ\rangle,\left(\sigma^{l}\right)^{a}\otimes\mathbb{I}^{bc}+\mathbb{I}^{ac}\otimes\left(\sigma^{l}\right)^{b}
+\mathbb{I}^{ab}\otimes\left(\sigma^{l}\right)^{c}\right)=\frac{18(d-1)}{d}.
\end{equation}
Thus, from \eqref{fisher-1}, one has
\begin{equation}
\begin{array}{rl}
& \ \ \ \ \sum\limits_{0\leqslant j<k\leqslant d-1}
F\left(\rho,\left(\sigma_t^{jk}\right)^{a}\otimes\mathbb{I}^{bc}+\mathbb{I}^{ac}\otimes\left(\sigma_t^{jk}\right)^{b}
+\mathbb{I}^{ab}\otimes\left(\sigma_t^{jk}\right)^{c}\right)\\
&+\sum\limits_{0\leqslant j<k\leqslant d-1}
F\left(\rho,\left(\sigma_s^{jk}\right)^{a}\otimes\mathbb{I}^{bc}+\mathbb{I}^{ac}\otimes\left(\sigma_s^{jk}\right)^{b}
+\mathbb{I}^{ab}\otimes\left(\sigma_s^{jk}\right)^{c}\right)\\
&+\sum\limits_{1\leqslant l\leqslant d-1}
F\left(\rho,\left(\sigma^{l}\right)^{a}\otimes\mathbb{I}^{bc}+\mathbb{I}^{ac}\otimes\left(\sigma^{l}\right)^{b}
+\mathbb{I}^{ab}\otimes\left(\sigma^{l}\right)^{c}\right)\\
&=\frac{d^3p}{(2+(d^3-2)p)}\left(6(d-1)+6(d-1)+\frac{18(d-1)}{d}\right)\\[2.00mm]
&=\frac{6p^2d^2(d-1)(d+3)}{2+(d^3-2)p}.
\end{array}
\end{equation}
Define $g(d,p)=\frac{6p^2d^2(d-1)(d+3)}{2+(d^3-2)p}-\frac{2(d-1)(3d+4)}{d}$.
By Corollary \ref{GHZ}, one can find $\rho$ is genuine entangled if $g(d,p)>0$,
i.e., $p>\frac{\sqrt{((4 + 3^d) (16 + 12^d + 56d^3 + 12 d^4 + 4 d^6 + 3 d^7))}+(4 + 3 d) (-2 + d^3)}{6d^3 (3 + d)}$.
For $d=2$, one can see $p>0.728714$ which is better than the criteria given in \cite{Clivaz}
since in \cite{Clivaz}, $p>\frac{11}{15}\approx 0.733333$.
\end{example}

Our criterion can be generalized to the general tripartite systems with different local dimensions, $\dim\mathcal{H}_a=d_1$, $\dim \mathcal{H}_b=d_2$ and $\dim \mathcal{H}_c=d_3$.
We have

\begin{theorem}
Any bi-separable tripartite state $\rho\in\mathcal{H}_a\otimes\mathcal{H}_b\otimes\mathcal{H}_c$ satisfies
\begin{equation}
\sum\limits_{\mu}F(\rho,A_{\mu}\otimes\mathbb{I}^{bc}+B_{\mu}\otimes\mathbb{I}^{ac}+\mathbb{I}^{ab}\otimes C_{\mu})\leqslant F_1+F_2,
\end{equation}
where $F_1=\max\{F_a,F_b,F_c\}$, and $F_2=\max\{F_{ab},F_{bc},F_{ac}\}$.
\end{theorem}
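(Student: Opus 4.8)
The plan is to mirror the proof of Theorem 1, the only new ingredient being that, with unequal local dimensions, the single-party bounds $F_a,F_b,F_c$ and the two-party bounds $F_{ab},F_{ac},F_{bc}$ need no longer coincide, so each is replaced by the corresponding maximum. First I would take an arbitrary bi-separable $\rho$ written in the form \eqref{bs} and apply the convexity property (2) of the quantum Fisher information to bound $\sum_\mu F(\rho,A_\mu\otimes\mathbb{I}^{bc}+B_\mu\otimes\mathbb{I}^{ac}+\mathbb{I}^{ab}\otimes C_\mu)$ from above by the weighted sum of the corresponding quantities for each product term $\rho_i^a\otimes\rho_i^{bc}$, $\rho_j^b\otimes\rho_j^{ac}$, and $\rho_l^{ab}\otimes\rho_l^c$.

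Next, for each type of bipartition I would regroup the collective observable so that the additivity property (1) applies across the cut. For the $a|bc$ terms I would write $A_\mu\otimes\mathbb{I}^{bc}+B_\mu\otimes\mathbb{I}^{ac}+\mathbb{I}^{ab}\otimes C_\mu=A_\mu\otimes\mathbb{I}^{bc}+\mathbb{I}^a\otimes(B_\mu\otimes\mathbb{I}^c+\mathbb{I}^b\otimes C_\mu)$, so that additivity gives $\sum_\mu F(\rho_i^a\otimes\rho_i^{bc},\cdot)=\sum_\mu[F(\rho_i^a,A_\mu)+F(\rho_i^{bc},B_\mu\otimes\mathbb{I}^c+\mathbb{I}^b\otimes C_\mu)]$, which by \eqref{fisher4} and \eqref{fisher9} is at most $F_a+F_{bc}$. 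Repeating this regrouping for the $b|ac$ and $c|ab$ cuts (isolating $B_\mu$ on $b$ with $A_\mu\otimes\mathbb{I}^c+\mathbb{I}^a\otimes C_\mu$ on $ac$, and $C_\mu$ on $c$ with $A_\mu\otimes\mathbb{I}^b+\mathbb{I}^a\otimes B_\mu$ on $ab$) yields the bounds $F_b+F_{ac}$ and $F_c+F_{ab}$, respectively.

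Finally, I would bound each single-party term by $F_1=\max\{F_a,F_b,F_c\}$ and each two-party term by $F_2=\max\{F_{ab},F_{ac},F_{bc}\}$, so that every product term in all three groups contributes at most $F_1+F_2$. Summing against the weights $p_i$, $q_j$, $r_l$ and using $\sum_i p_i+\sum_j q_j+\sum_l r_l=1$ then gives the claimed bound $F_1+F_2$.

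The only real subtlety, and the step I would check most carefully, is the regrouping of the observable for the $b|ac$ and $c|ab$ cuts: unlike the symmetric $a|bc$ case, one must verify that the sum $A_\mu\otimes\mathbb{I}^{bc}+B_\mu\otimes\mathbb{I}^{ac}+\mathbb{I}^{ab}\otimes C_\mu$ really factors as (observable on the singleton)$\,\otimes\mathbb{I}$ plus $\mathbb{I}\otimes\,$(observable on the pair) with respect to each bipartition, so that the additivity property is legitimately applicable and the resulting local observable sums genuinely fall under the hypotheses \eqref{fisher4} and \eqref{fisher9}. Once this bookkeeping is confirmed, passing to the maxima is immediate and requires no dimension-dependent estimate, which is precisely why the statement holds verbatim for arbitrary $d_1$, $d_2$, $d_3$.
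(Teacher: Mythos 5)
Your proposal is correct and follows essentially the same route as the paper: the authors likewise reuse the convexity-plus-additivity estimate from the proof of Theorem 1 to bound the sum by $\sum_i p_i(F_a+F_{bc})+\sum_j q_j(F_b+F_{ac})+\sum_l r_l(F_c+F_{ab})$ and then pass to the maxima $F_1$ and $F_2$. Your extra care about regrouping the collective observable across each bipartition is exactly the bookkeeping the paper leaves implicit, and it checks out.
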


\begin{proof}
From the proof of Theorem 1, one has
\begin{equation}
\begin{array}{rl}
& \ \ \ \ \sum\limits_{\mu}F(\rho,A_{\mu}\otimes\mathbb{I}^{bc}+B_{\mu}\otimes\mathbb{I}^{ac}+\mathbb{I}^{ab}\otimes C_{\mu})\\
&\leqslant\sum\limits_ip_i(F_a+F_{bc})+\sum\limits_jq_j(F_{b}+F_{ac})+\sum\limits_lr_l(F_{c}+F_{ab})\\
&\leqslant F_1+F_2.
\end{array}
\end{equation}
\end{proof}
\section{Conclusion}
Detecting genuine multipartite entanglement is a fundamental and significant task in quantum information theory.
We have obtained a criteria to detect genuine tripartite entanglement based on quantum Fisher information.
Particularly, for three-qubit state systems, example shows that our criterion detects better
the genuine entanglement than the existing criterion. Moreover, we have generalized the results to any tripartite systems
with arbitrarily different local dimensions.

\section{Acknowledgements}
This work is supported by the National Natural Science Foundation of China under Grant Nos. 11805143 and 11675113,
Beijing Municipal Commission of Education (KZ201810028042) and Academy for Multidisciplinary Studies of Capital Normal University.

\end{document}